\newtheorem{thm}{Theorem}
\begin{document}

%
%

\title{Stronger Error Disturbance Relations for Incompatible Quantum Measurements}
\author{Chiranjib Mukhopadhyay, Namrata Shukla, and Arun Kumar Pati} 
\affiliation{Harish-Chandra Research Institute, Chhatnag Road, Jhunsi, Allahabad 211
019, India} 

\date{\today}

\begin{abstract}
We formulate a new error-disturbance relation, which is free from explicit 
dependence upon variances in observables. This error-disturbance relation shows improvement over the one provided by the Branciard inequality 
and the Ozawa inequality for some initial states and for particular class of joint measurements under consideration. We also prove a modified form of 
Ozawa's error-disturbance relation. The later relation provides a tighter bound compared to the Ozawa and the Branciard inequalities for a small number of states.
\end{abstract}

\maketitle

\begin{section}{Introduction}
Uncertainty principle first enunciated by Heisenberg \cite{Heisenbergoriginal} is one of the basic tenets 
of quantum mechanics and still a subject of active investigation. 
The original uncertainty principle encapsulates the impossibility of simultaneous measurement of two incompatible physical
observables with arbitrary precision, as the measurement of one disturbs the other. 
Heisenberg also gave, what he thought was a mathematical formulation of this principle for position 
and momentum operators. This was later put on firm footing for general physical 
observables by Kennard \cite{Kennard}. However, the uncertainty relation was rigorously proved by 
Robertson \cite{Robertson} 
and tightened by Schr{\"o}dinger \cite{Schroedinger}. 
These relations were collectively called uncertainty relations. The
Robertson version of the uncertainty relation is given by
\begin{equation}
\Delta A  \Delta B \geq \frac{1}{2}|\langle\psi|[A,B]|\psi\rangle|, \label{Robertson}
\end{equation}
where $A$ and $B$ are two incompatible observables on the Hilbert space of the system and the 
variance of an observable $A$ in a quantum state $|\psi\rangle$ is given by $\Delta A^2 = \langle\psi|A^{2}|\psi\rangle - {\langle\psi|A|\psi\rangle}^2$. Uncertainty relations are 
of great importance in physics including foundations of 
quantum mechanics, quantum information and can have various technological 
applications \cite{Busch,Hall1,Hofman,Guehne,Fuchs}. 
It may be noted that uncertainty relations given before may happen 
to be trivial even if the observables are incompatible on the state of system. This problem was recently 
cured with the introduction of stronger uncertainty relations by Maccone and Pati \cite{Macconepati} and these 
capture the concept of incompatible observables.\\

Arthurs and Kelly \cite{Arthur-kelly} derived an expression akin to the 
Robertson uncertainty relation for error in measurement of observable $A$ and corresponding 
disturbance on observable $B$. However, this was shown to be violated by Arthurs and Goodman for unbiased measurements \cite{Arthur-Goodman}. 
Later, Ozawa proved a relation \cite{Ozawa, Ozawa2}, which connects error in measuring one observable
and corresponding disturbance in another observable to the quantum fluctuations (variances) of these two incompatible 
observables. The error and disturbance mentioned here contain the information about 
interactions of the system with the measuring apparatus.
Ozawa showed that for general measurement strategies instead of unbiased measurements, the bound given in Ref.\cite{Arthur-Goodman}
can be violated. This was verified experimentally by Rozema {\it et al.} \cite{Rozema} and Erhart {\it et al.} \cite{Erhart}. \\


There is a recent debate on alleged violations of the Heisenberg
error-disturbance relation \cite{Buschlahtiwerner, Ozawaresolve, Buscemi, Busch01, Busch02}. This debate originates  from the fact that two
approaches start from different definitions of “error” and “disturbance”.
For example, Ozawa's approach \cite{Ozawaresolve} is based on the expectations of squared
differences of noise operators in a measurement
process. These quantities depend on the input state of the measurement
apparatus. However, the definitions in Busch-Lahti-Werner \cite{Buschlahtiwerner} approach are
characteristic of the
measurement scheme, and hence independent of the input state. They have
shown that the standard textbook form of uncertainty relation is still
respected. One should note that in these two approaches, error and
disturbance quantifiers have different meanings.
In addition to several investigations on measurement related uncertainty \cite{Busch1, Busch2, Busch3, Busch4,Macccone}, there have been subsequent developments 
on improving the tightness of the bound provided by Ozawa's 
error-disturbance relation \cite{Hall, Weston, Branciard}. 
The Branciard bound in this series of error-disturbance relations is 
known to be tight compared to the Ozawa relation \cite{Qiao}. However we will not dwell on the debate here.\\ 

In this letter, we intend to prove a new stronger error-disturbance relation for incompatible quantum measurement, 
that does not depend on the variances of observables. This
new error-disturbance relation provides a stronger bound than the Branciard bound for some initial states. 
We derive another error disturbance relation 
which is the modified form of Ozawa's error-disturbance relation and is obtained by using the product of variance form of newly introduced uncertainty relations \cite{Macconepati}.
We also prove yet another stronger error-disturbance inequality for general incompatible observables.\\

\end{section}

\begin{section}{Error-disturbance relations}
Let us consider the 
Heisenberg picture and treat quantum states as time independent, {\it i.e.}, the effect of interaction being manifested 
through the evolution of Hermitian operators which are physical observables of the system under consideration. 
We assume that the system and the apparatus (probe) are initially non-entangled and represented by states
$|\psi\rangle_{s}$ and $|\phi\rangle_{p}$, respectively. The physical observables that we want to measure are
$A$ and $B$ such that in the joint Hilbert space $ \mathcal{H}_s\otimes\mathcal{H}_p$, 
we have $A_{in} = A \otimes \mathbb{I}$ and $B_{in} = B \otimes \mathbb{I}$. We now fix an 
operator $M$ given by $M_{in} = \mathbb{I} \otimes M$ on the Hilbert space of the probe. We will use this operator (after measurement) 
to read off and estimate the value of $A$. An entangling global unitary $U$ can be used to couple the system 
to the probe and this interaction transforms these aforementioned 
operators into new operators $A_{out} = U^{\dagger}\left(A \otimes \mathbb{I}\right) U$,
$B_{out} = U^{\dagger}\left(B \otimes \mathbb{I}\right) U$, and
$M_{out} = U^{\dagger}\left(\mathbb{I} \otimes M\right) U$.
It is to be noted that $M$ and $B$ initially act on different Hilbert spaces 
and they remain to be commuting after the unitary evolution. Since, $M_{out}$ and $B_{out}$ are commuting, 
we expect them to be simultaneously measurable. Therefore, the problem of impossibility of joint measurements in this measurement process is negated, the price 
being paid is the statistical error of estimation introduced while trying to estimate $A$ from another observable $M$. 
Now, we try to estimate this error, the natural choice being the root mean squared value of the difference between 
the estimator ($M_{out}$) and the original ($A_{in}$) observables. This is defined as noise $\epsilon_{A}$ 
in the measurement of observable $A$ and is given by \cite{Ozawa}
\begin{equation}\label{noise}
\epsilon_{A} = \sqrt{\langle\Psi|\left(M_{out}-A_{in}\right)^{2}|\Psi\rangle},
\end{equation}
where $|\Psi\rangle=|\psi\rangle_s\otimes|\phi\rangle_p$.
If the observable $B$ is measured immediately after $A$, there will be some disturbance in the measurement 
due to the prior interaction happened in the system during the measurement of the observable $A$. Thus, similar to the noise, the disturbance for 
$B$ is defined as the root mean squared value of the difference between the original observable $B_{in}$ and the 
transformed observable $B_{out}$, {\it i.e.}, 
\begin{equation}\label{disturbance}
\eta_{B} = \sqrt{\langle\Psi|\left(B_{out}-B_{in}\right)^{2}|\Psi\rangle}.
\end{equation} With these definitions at hand one would like to find a relation between the error and the disturbance. The first real improvement on this front 
for an unbiased estimator was given by Ozawa \cite{Ozawa} which reads as
\begin{equation}
\epsilon_{A}\eta_{B} + \epsilon_{A} \Delta B + \Delta A \eta_{B} \geq |\mathcal{C}_{AB}|. \label{ozawa}
\end{equation}
where $|\mathcal{C}_{AB}|=\frac{1}{2}|\langle\psi|[A,B]|\psi\rangle|$. Note that, the Ozawa relation 
depends on $\Delta A$ as well as $\Delta B$.\\

After Ozawa's work, there has been many more error-disturbance relations given by different authors \cite{Hall,Weston,Branciard}.
The Branciard error-disturbance relation acclaimed as the best of them all, is expressed as

\begin{equation}
\epsilon_{A}^{2} \Delta B^{2} + \eta_{B}^{2} \Delta A^{2} + 2 \epsilon_{A} \eta_{B} \sqrt{\Delta A^{2} \Delta B^{2} - \mathcal{C}_{AB}^{2}} \geq \mathcal{C}_{AB}^{2}.\label{branciard1}
\end{equation}
However, all the existing error-disturbance relations do involve variances of $A, B, B_{out}$ and $M_{out}$.
We look for an error-disturbance relation free from quantum fluctuations in the observables. 
This is the new feature of our main error-disturbance relation.
\end{section}



\begin{section}{New error-disturbance Relations}

In this section, we derive two different error-disturbance relations and illustrate their efficiencies with some examples. 

\begin{thm}\label{newmdr} 
For the noise operator $N_{A}=M_{out} - A_{in}$ and corresponding disturbance operator $ D_{B} = B_{out} - B_{in}$, 
if the system and the probe are in joint state
$|\Psi\rangle = |\psi\rangle_{s} \otimes |\phi\rangle_{p}$, the following inequality holds:
\begin{eqnarray}\label{firstmdr}
\epsilon_{A}^{2} + \eta_{B}^{2} &\geq &\pm i\langle \psi|[A,B]|\psi \rangle \mp i\langle \Psi|[M_{out},B_{in}]|\Psi\rangle \nonumber\\ 
& & \mp i\langle\Psi|[A_{in},B_{out}]|\Psi\rangle +|\langle\Psi | N_{A} \pm i D_{B} |\Psi^{\perp} \rangle |^{2}, \nonumber\\
\label{firstmdreqn}
\end{eqnarray}
where the sign is chosen such that $\pm i\langle\psi|[A, B]|\psi\rangle $ is positive (and similarly for other commutators) 
and $|\Psi^{\perp}\rangle$ is orthogonal to $|\Psi\rangle$.
\end{thm}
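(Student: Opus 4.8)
The plan is to recognize the right-hand side as the sum form of the Maccone--Pati uncertainty relation \cite{Macconepati} applied not to $A$ and $B$ themselves, but to the noise operator $N_{A}=M_{out}-A_{in}$ and the disturbance operator $D_{B}=B_{out}-B_{in}$, both of which are Hermitian on $\mathcal{H}_s\otimes\mathcal{H}_p$. First I would record the elementary but crucial reduction relating the error and disturbance to the variances of these operators: since $\epsilon_{A}^{2}=\langle\Psi|N_{A}^{2}|\Psi\rangle=\Delta N_{A}^{2}+\langle\Psi|N_{A}|\Psi\rangle^{2}\geq\Delta N_{A}^{2}$ and likewise $\eta_{B}^{2}\geq\Delta D_{B}^{2}$, it suffices to lower bound $\Delta N_{A}^{2}+\Delta D_{B}^{2}$. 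For an unbiased measurement the two inequalities become equalities, but this is not needed for the direction of the bound.

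Next I would invoke the Maccone--Pati sum relation for the pair $(N_{A},D_{B})$ in the state $|\Psi\rangle$, namely
\[
\Delta N_{A}^{2}+\Delta D_{B}^{2}\geq \pm i\langle\Psi|[N_{A},D_{B}]|\Psi\rangle+|\langle\Psi|(N_{A}\pm iD_{B})|\Psi^{\perp}\rangle|^{2},
\]
with the sign fixed so that the commutator term is positive and $|\Psi^{\perp}\rangle$ any state orthogonal to $|\Psi\rangle$. This already supplies the last term of the claimed inequality verbatim, so the only remaining task is to unfold $[N_{A},D_{B}]$ into the three pieces appearing on the right-hand side of \eqref{firstmdreqn}.

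The decisive computation is the expansion
\[
[N_{A},D_{B}]=[M_{out},B_{out}]-[M_{out},B_{in}]-[A_{in},B_{out}]+[A_{in},B_{in}].
\]
Here the first term vanishes because $M_{out}$ and $B_{out}$ commute: they descend by conjugation with the same unitary $U$ from the operators $\mathbb{I}\otimes M$ and $B\otimes\mathbb{I}$, which act on distinct tensor factors and hence commute. The last term satisfies $\langle\Psi|[A_{in},B_{in}]|\Psi\rangle=\langle\psi|[A,B]|\psi\rangle$ because the probe factor contributes unity. Substituting these two facts leaves $\langle\Psi|[N_{A},D_{B}]|\Psi\rangle=\langle\psi|[A,B]|\psi\rangle-\langle\Psi|[M_{out},B_{in}]|\Psi\rangle-\langle\Psi|[A_{in},B_{out}]|\Psi\rangle$; multiplying by $\pm i$ distributes the sign to produce exactly the three commutator terms of \eqref{firstmdreqn}, and combining with the reduction of the first paragraph completes the argument.

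I expect the main obstacle to be bookkeeping rather than conceptual: keeping the single sign choice $\pm$ consistent across all four commutator contributions and the orthogonal-state term, and verifying that $|\langle\Psi|(N_{A}\pm iD_{B})|\Psi^{\perp}\rangle|^{2}$ as produced by the Maccone--Pati inequality matches the paper's ordering of bra and ket, which differ only by complex conjugation and a relabeling of $|\Psi^{\perp}\rangle$ that leaves the modulus invariant. The one physical input that must be flagged explicitly is the commutativity $[M_{out},B_{out}]=0$, without which the bound would carry an extra term.
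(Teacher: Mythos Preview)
Your proposal is correct and follows essentially the same route as the paper: reduce $\epsilon_A^2+\eta_B^2$ to $\Delta N_A^2+\Delta D_B^2$, apply the Maccone--Pati sum inequality to the pair $(N_A,D_B)$, expand $[N_A,D_B]$ bilinearly, and drop the $[M_{out},B_{out}]$ term by commutativity. The only cosmetic difference is that the paper re-derives the Maccone--Pati bound in place (via $\|(C\mp iD)|\Psi\rangle\|^2$ and Cauchy--Schwarz against $|\Psi^\perp\rangle$) rather than quoting it as a black box, but the logical content is identical.
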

\begin{proof}

For the above mentioned observables $N_A$ and $D_B$, we define, $ C = N_A -\langle N_A\rangle $ and $ D = D_B -\langle D_B\rangle $, where 
$\langle N_A\rangle=\langle\Psi|N_A|\Psi\rangle,~\langle D_B\rangle=\langle\Psi|D_B|\Psi\rangle$.
The standard deviations of $N_A$ and $D_B$ can therefore be written as $\Delta N_A = \|C|\Psi\rangle\|$ and
$\Delta D_B = \|D|\Psi\rangle\|$. Consider the quantity 
\begin{equation} \label{interm}
\| (C \mp iD)|\Psi\rangle\|^{2}  = \Delta N_A^{2} + \Delta D_B^{2} \mp i \langle \Psi|[N_A,D_B]|\Psi\rangle.
\end{equation}
Using the Cauchy-Schwarz inequality, the LHS of this equation is bounded from below as

\begin{eqnarray} \label{interm2}
&&\left| \langle \Psi | N_A \pm iD_B |\Psi^{\perp}\rangle \right|^{2}\nonumber\\
&=& \left| \langle\Psi | (N_A \pm iD_B) -\langle N_A \pm iD_B \rangle |\Psi^{\perp}\rangle \right|^{2} \nonumber \\
&=&\left| \langle \Psi | C \pm iD |\Psi^{\perp}\rangle \right|^{2} \nonumber \\
&\leq & \| (C \mp iD)|\Psi\rangle\|^{2}.
\end{eqnarray}
Combining Eq.~(\ref{interm}) and Eq.~(\ref{interm2}) leads to

\begin{eqnarray}\label{newmdrprep}
&&\Delta N_{A}^{2} + \Delta D_{B}^{2}\nonumber\\
&\geq & \pm i \langle \Psi |[N_{A},D_{B}]|\Psi \rangle + \vert\langle \Psi |N_{A} \pm iD_{B}|\Psi^{\perp} \rangle \vert^{2} \nonumber \\
                                & = &  \pm i \langle\Psi|[M_{out},B_{out}]|\Psi \rangle \pm i \langle \psi|[A,B]|\psi \rangle \nonumber\\
                                & & \mp i \langle \Psi|[M_{out},B_{in}]|\Psi\rangle \mp i \langle \Psi|[A_{in},B_{out}]|\Psi \rangle \nonumber \\
                                & &  + \vert\langle \Psi |N_{A} \pm iD_{B}|\Psi^{\perp} \rangle \vert^{2} \nonumber \\
                                & = &  \pm i \langle \psi|[A,B]|\psi \rangle \mp i \langle \Psi|[M_{out},B_{in}]|\Psi\rangle \nonumber \\
                                & & \mp i \langle\Psi|[A_{in},B_{out}]|\Psi\rangle + \vert\langle \Psi |N_{A} \pm iD_{B}|\Psi^{\perp} \rangle \vert^{2}\nonumber\\
\end{eqnarray}
The new error-disturbance relation for the sum of squares of noise and disturbance follows from the definitions
in Eq.~(\ref{noise}) and 
Eq.~(\ref{disturbance}), and using the fact that $\epsilon_{A}^{2}\geq \Delta N_A^2,~\eta_{B}^{2}\geq \Delta D_B^2$ and $[M_{out}, B_{out}]=0$, {\it i.e.}, we have
\begin{eqnarray}
\epsilon_{A}^{2} + \eta_{B}^{2}& &\geq \pm i \langle \psi|[A,B]|\psi\rangle \mp i \langle\Psi|[M_{out},B_{in}]|\Psi\rangle \nonumber \\
                                && \mp i \langle\Psi|[A_{in},B_{out}]|\Psi\rangle + \vert\langle \Psi |N_{A} \pm iD_{B}|\Psi^{\perp} \rangle \vert^{2}.\nonumber
\end{eqnarray}
Hence the proof.
\end{proof}

A noticeable point in this error-disturbance relation is, it involves no mention of variances of observables in input states and also
provides us a better bound than the Branciard bound for some choices of the initial state of the system and the measurement strategy.
To illustrate the new error-disturbance relation (\ref{firstmdr}) for a qubit system, we assume the system and the probe are initially in the states
\begin{eqnarray}
|\psi\rangle_s&=&\alpha|0\rangle+\beta|1\rangle=u|0\rangle, \nonumber\\
|\phi\rangle_p&=&|1\rangle. 
\label{initialsp}
\end{eqnarray}
where $\alpha$ is real and unitary $u$ is written as
\begin{equation} \label{unitary}
u = \left(\begin{array}{cc}
\alpha & -\beta^* \\
\beta  & \alpha \\
\end{array}\right). \nonumber
\end{equation}
We fix the input observables and estimator as
$A_{in}=\sigma_{x}^{\prime}\otimes \mathbb{I}$, $B_{in}= \sigma_{y}^{\prime}\otimes \mathbb{I}$ and $M_{in}=\mathbb{I} \otimes \sigma_{x} \label{inputM0}$, 
where $\sigma_{x}^{\prime}=u\sigma_{x}u^{\dagger},~\sigma_{y}^{\prime}=u \sigma_{y}u^{\dagger}$ and $\sigma_{x}, \sigma_{y}$ and $\sigma_{z}$ are Pauli spin matrices. 
We now couple the system and the probe through a CNOT interaction given by
$U = P_{0}\otimes \mathbb{I} + P_{1} \otimes \sigma_{x}$ with $P_0=|0\rangle\langle0|$ and $P_1=|1\rangle\langle1|$.
For the present strategy, the operators $A_{in}$, $B_{in}$ and $M_{in}$ during the interaction transform 
into $A_{out}$, $B_{out}$, and $M_{out}$. Thus, the value of the commutator term is $|\mathcal C_{AB}|=1$. Since, we have $ \Delta A=\Delta B=1$ for this particular choice of input observables,
we can express the Branciard bound using Eq.~(\ref{branciard1}) and it is given by the relation
\begin{equation}
\epsilon_A^2+\eta_B^2\geq1.
\end{equation}
Now, calculating each term in our error-disturbance relation
given by Eq.~(\ref{firstmdr}) with the above choices of input observables, 
estimator, CNOT interaction and the positive sign of $\pm i\langle\psi|[A, B]|\psi\rangle $, reduces the inequality to 
\begin{equation}
\epsilon_A^2+\eta_B^2\geq\ 2+2\alpha^2(\beta^{*}-\beta)^2+4|\alpha^2\beta^{*}(\beta^{*}-\beta)|^2.
\label{new comp}
\end{equation}
On decomposition, $ \alpha=\cos\theta,~\beta=\sin\theta~e^{i\phi}$ we have the above inequality
\begin{equation}
\epsilon_A^2+\eta_B^2\geq\ 2-8\cos^2\theta\sin^2\theta\sin^2\phi+16\cos^4\theta\sin^4\theta\sin^2\phi.
\label{new comp 1}
\end{equation}
Further, on choosing $\phi=\pi/2$ in this case of example, we have the inequality as given by
\begin{equation}
\epsilon_A^2+\eta_B^2\geq\ 1+\cos^42\theta
\label{new comp 2}
\end{equation}
We can easily see that our bound is better than the Branciard bound for this specific setting in the admissible range of $\theta$
and $\phi$. \\


\begin{figure}
\centering
\includegraphics[scale=0.90]{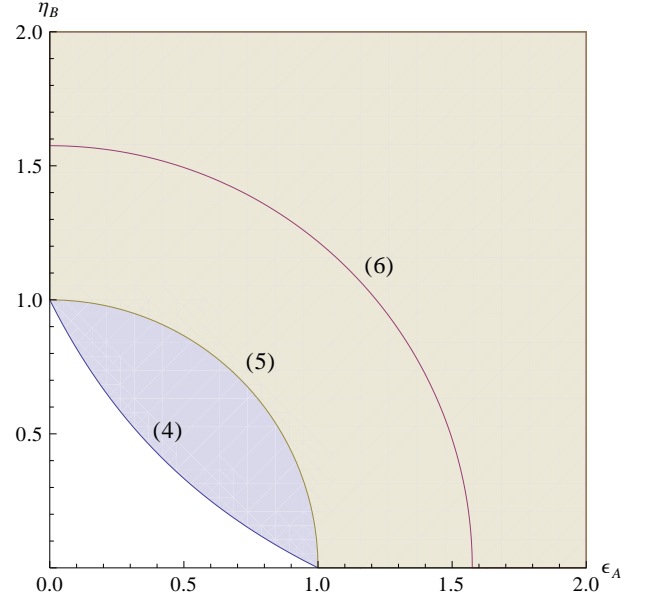}
\caption{(Color Online) Error-disturbance relations for the fixed values of observables and state such that $\mathcal C_{AB}=1$. 
The purple line denotes the best bound available from Eq.~(\ref{firstmdreqn}). The olive line is the Branciard 
bound given by Eq.~(\ref{branciard1}) and the blue line shows the bound available from generalized error-disturbance relation
Eq.~(\ref{ozawa}) by Ozawa.}
\label{Fig_6}
\end{figure}

In order to better understand the improvement of bounds provided by Ozawa \cite{Ozawa} and Branciard \cite{Branciard}, we plot all the three
error-disturbance inequalities given by Eq.~(\ref{ozawa}), Eq.~(\ref{branciard1}) and Eq.~(\ref{firstmdreqn})in the $\epsilon_A-\eta_B$ plane
for a qubit state as an example. We fix the input observables and the qubit state in such a way that $ |\mathcal C_{AB}|=1$. This can be achieved by choosing
$A_{in}=\sigma_{x}^{\prime}\otimes \mathbb{I}$, $B_{in}= \sigma_{y}^{\prime}\otimes \mathbb{I}$. We choose 
the interaction unitary to be a qubit CNOT gate $U = P_{0}\otimes \mathbb{I} + P_{1} \otimes \sigma_{x}$ with $P_0=|0\rangle\langle0|$ and $P_1=|1\rangle\langle1|$
in order to join the probe and the system. Since, in the Schr{\"o}dinger picture, states are time dependent and $|\Psi^{\perp}\rangle$ can be generated 
by projecting any state $|r\rangle $ to the orthogonal subspace of $|\Psi\rangle$, {\it i.e.},
$|\Psi^{\perp}\rangle \propto \left(\mathbb{I} - |\Psi \rangle \langle \Psi| \right) |r\rangle$, 
where $|r\rangle$ is state of system and probe.
However, we will be using the Heisenberg 
picture here and it can be easily shown that in the Heisenberg picture, we have
\begin{equation}
|\Psi^{\perp}\rangle \propto U^{\dagger}\left(\mathbb{I} - U|\Psi \rangle \langle
\Psi|U^{\dagger} \right)U|r\rangle.
\end{equation}
Maximizing the value of the final term in Eq.~(\ref{firstmdreqn}) (in order to get the best lower bound),
requires maximization over all random states $|r\rangle$. We do this maximization by randomly choosing states 
in numerics. It is evident from the Fig. \ref{Fig_6} that the new error disturbance relation introduced 
in this paper gives improvement over the existing bounds.

We should add that the Branciard relation is universally valid, {\it i.e.}, it is independent of the way the joint measurement is approximated.
However, our relations are not independent of the joint measurement approximation, as, {\it e.g.}, $ M_{out}, B_{out}$ appear explicitly. As
a consequence, relation given by Eq.~(\ref{firstmdreqn}) is only valid for 
the particular class of joint measurements.\\

For more insight of the comparison of the error-disturbance relations by Ozawa, Branciard and the authors given by Eq.~(\ref{ozawa}), Eq.~(\ref{branciard1})
and Eq.~(\ref{firstmdreqn}), respectively, we denote the left hand sides of 
these three equations which would be bounded below by the value of $|\mathcal C_{AB}|$, respectively as given by 

\begin{equation}
L_{Ozawa}=\epsilon_{A}\eta_{B} + \epsilon_{A} \Delta B + \Delta A \eta_{B},\label{L Ozawa}
\end{equation}
\begin{eqnarray}
&&L_{Branciard}\nonumber\\
&=&\sqrt{\epsilon_{A}^{2} \Delta B^{2} + \eta_{B}^{2} \Delta A^{2}
+2 \epsilon_{A} \eta_{B} \sqrt{\Delta A^{2} \Delta B^{2}-\mathcal{C}_{AB}^{2}}},\nonumber\\
\label{L Branciard} 
\end{eqnarray}
and
\begin{eqnarray}
L_{New}^{(1)}&=&\frac{1}{2}\Big[\epsilon_{A}^{2} + \eta_{B}^{2} \pm i\langle \Psi|[M_{out},B_{in}]|\Psi\rangle \nonumber\\
&&\pm i\langle\Psi|[A_{in},B_{out}]|\Psi\rangle-|\langle\Psi | N_{A} \pm i D_{B} |\Psi^{\perp} \rangle |^{2}\Big],\nonumber\\
\end{eqnarray}\label{L New 1} while the sign in Eq.~(\ref{firstmdr}) is chosen that $\pm i\langle\psi|[A,B]|\psi\rangle$ is positive. 

For illustration, we give an example, where the system and the probes are two qubits. 
Let the system be initially in the state $|\psi\rangle_{s} =\cos \theta|0\rangle + \sin\theta|1\rangle$ 
and the probe be initially in the state $|\phi\rangle_{p}=|1\rangle$. We fix our input observables 
and the estimator as $ A_{in}=\sigma_{x} \otimes \mathbb{I}$, $B_{in} = \sigma_{y} \otimes \mathbb{I}$ 
and $M_{in} = \mathbb{I} \otimes \sigma_{x}$.
The system is again coupled with the probe through a CNOT interaction and we use the same method to 
generate the $|\Psi^{\perp}\rangle$  as in Heisenberg picture, we have $
|\Psi^{\perp}\rangle \propto U^{\dagger}\left(\mathbb{I} - U|\Psi \rangle \langle
\Psi|U^{\dagger} \right)U|r\rangle$. In order to get the best lower bound the maximization is performed 
by randomly choosing states $|r\rangle$ in numerics. One can also see that, $B_{out}$ and $B_{in}$ have the same spectrum, 
but not $M_{out}$ and $A_{in}$. Note that, the best bound given by the Branciard inequality in this case is obtained on replacing 
$\eta_{B}$ by $\eta_{B} \sqrt{1-\frac{\eta_{B}^{2}}{4}}$ in Eq.~(\ref{branciard1})\cite{Branciard}. In Fig. \ref{Fig_2}, we use this bound 
while comparing with the new relation. It is evident in Fig. \ref{Fig_2} that for the above mentioned choices of initial states of system, probe and 
their interaction, the bound presented here goes beyond the tightest possible Branciard bound 
for approximately 25$\%$ of the states. Unfortunately, these points do not fall on a clearly visible line, 
signifying that finding analytical expression for this bound in simpler terms is challenging.\\



\begin{figure}
\centering
\includegraphics[scale=0.32,angle = 270]{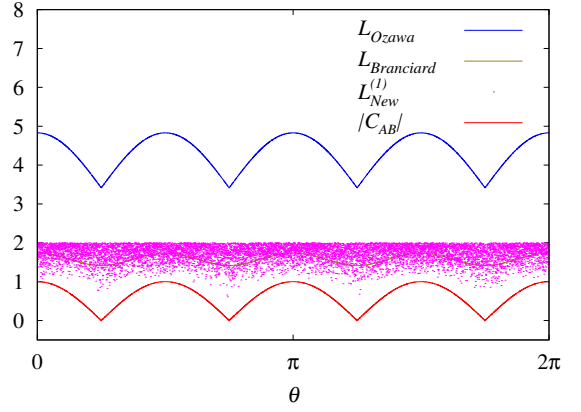}
\caption{(Color Online) The red line is the value of the commutator for any arbitrary initial system state (qubit), 
$\theta \in [0,2\pi]$ (for 10000 different states). The olive line is the tightest possible Branciard 
bound and the purple dots are points corresponding to the new error-disturbance relation inequality Eq.~(\ref{firstmdreqn}). 
The blue line corresponds to Ozawa's error-disturbance relation. 
It is seen that for roughly 25$\%$ of states, the new bound is tighter than the Branciard bound.} 
\label{Fig_2}
\end{figure}

\begin{thm}\label{modifiedozawapdt}
For Noise operator $N_{A}$ and corresponding Disturbance operator $D_{B}$ defined as, $N_{A} = M_{out} - A_{in}$ 
and $ D_{B} = B_{out} - B_{in}$, if the system and the probe are in joint state 
$|\Psi\rangle = |\psi\rangle_{s} \otimes |\phi\rangle_{p}$, the following inequality can be proved:
\begin{eqnarray}
    & \epsilon_{A}\eta_{B} + \eta_{B}\Delta A + \epsilon_{A}\Delta B & \nonumber \\
                 & -\frac{1}{2} \frac{|\langle\Psi | N_{A}\Delta D_{B}  \pm i D_{B} \Delta N_{A}|\Psi^{\perp} \rangle |^{2} }{\epsilon_{A}\eta_{B}} & \nonumber \\
                 & - \frac{1}{2} \frac{|\langle\Psi | A \Delta D_{B}  \pm i D_{B} \Delta A|\Psi^{\perp} \rangle |^{2} }{\Delta A\eta_{B}} & \nonumber \\
                 &  - \frac{1}{2} \frac{|\langle\Psi | N_{A}\Delta B  \pm i B\Delta N_{A}|\Psi^{\perp} \rangle |^{2} }{\epsilon_{A}\Delta B} &\geq |\mathcal{C}_{AB}|. 
\label{modiozawapdt}
\end{eqnarray}

\end{thm}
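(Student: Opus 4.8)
The plan is to follow the skeleton of Ozawa's own derivation of Eq.~(\ref{ozawa}), but to replace the single ingredient that makes it loose --- the Robertson inequality applied to three pairs of operators --- by a strengthened \emph{product-form} uncertainty relation built from the same orthogonal-state Cauchy--Schwarz estimate already exploited in the proof of Theorem~\ref{newmdr}. Concretely, the three subtracted correction terms in Eq.~(\ref{modiozawapdt}) are exactly the extra slack that this strengthening adds to each of the three Robertson bounds, so the whole argument reduces to assembling that auxiliary relation three times.

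First I would establish the auxiliary product relation. For any two observables $X,Y$ with centered parts $\bar X=X-\langle X\rangle$ and $\bar Y=Y-\langle Y\rangle$, consider the vector $(\Delta Y\,\bar X \mp i\,\Delta X\,\bar Y)|\Psi\rangle$, the standard deviations entering as scalar weights. Expanding its squared norm gives $2\Delta X^{2}\Delta Y^{2}\mp i\,\Delta X\Delta Y\,\langle[X,Y]\rangle$, while bounding the same norm from below by its overlap with $|\Psi^\perp\rangle$ (as in Eq.~(\ref{interm2}), and using $\langle\Psi|\Psi^\perp\rangle=0$ to restore the full operators inside the overlap) yields
\[
\Delta X\,\Delta Y-\frac{1}{2}\frac{|\langle\Psi|X\Delta Y\pm iY\Delta X|\Psi^\perp\rangle|^{2}}{\Delta X\,\Delta Y}\geq\frac{1}{2}|\langle[X,Y]\rangle|,
\]
where for each pair the sign is chosen so that $\pm i\langle[X,Y]\rangle\geq0$, exactly as in Theorem~\ref{newmdr}.

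Next I would reproduce Ozawa's commutator identity. Writing $M_{out}=A_{in}+N_A$ and $B_{out}=B_{in}+D_B$ and invoking $[M_{out},B_{out}]=0$ gives $[A_{in},B_{in}]=-[A_{in},D_B]-[N_A,B_{in}]-[N_A,D_B]$. Since each of these expectation values is purely imaginary (the commutator of Hermitian operators is anti-Hermitian), the triangle inequality yields $2|\mathcal{C}_{AB}|\leq|\langle[N_A,D_B]\rangle|+|\langle[A_{in},D_B]\rangle|+|\langle[N_A,B_{in}]\rangle|$. I would then apply the auxiliary relation above to each of the three pairs $(N_A,D_B)$, $(A_{in},D_B)$ and $(N_A,B_{in})$, recalling $\Delta A_{in}=\Delta A$ and $\Delta B_{in}=\Delta B$, and sum the three results; dividing by two should reproduce the right-hand side of Eq.~(\ref{modiozawapdt}).

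The last --- and I expect the only genuinely delicate --- step is to pass from the standard deviations $\Delta N_A,\Delta D_B$ to the noise and disturbance $\epsilon_A,\eta_B$ in the denominators, using $\Delta N_A\le\epsilon_A$ and $\Delta D_B\le\eta_B$. Here one must track the direction of monotonicity carefully: raising a denominator enlarges the positive product term $\Delta N_A\Delta D_B\to\epsilon_A\eta_B$, but because the correction enters with a minus sign, raising its denominator also makes $-\tfrac12|\cdots|^{2}/(\Delta N_A\Delta D_B)$ \emph{less} negative; both moves weaken the bound in the same direction, so the chain of inequalities survives. Crucially, the numerators retain the genuine standard deviations $\Delta N_A,\Delta D_B,\Delta A,\Delta B$ and are untouched by this substitution, which is precisely why the stated form carries $\epsilon_A\eta_B$, $\Delta A\eta_B$, $\epsilon_A\Delta B$ in the denominators while the bracketed operator combinations keep the $\Delta$'s.
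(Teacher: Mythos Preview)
Your proposal is correct and follows essentially the same route as the paper's own proof: the paper cites the Maccone--Pati product uncertainty relation (your auxiliary inequality is exactly Eq.~(\ref{lemma2}) rewritten), applies it to the same three commutators obtained from Ozawa's identity plus the triangle inequality, and then invokes $\epsilon_A\ge\Delta N_A$, $\eta_B\ge\Delta D_B$. Your treatment of the last monotonicity step is in fact more explicit than the paper's (which merely appeals to ``some properties of inequalities''); note only that once you sum the three auxiliary bounds the factor $\tfrac12$ on each commutator already combines with $2|\mathcal C_{AB}|=|\langle[A,B]\rangle|$ to give $|\mathcal C_{AB}|$ directly, so no further division by two is needed.
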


\begin{proof}
For two arbitrary observables $A$ and $B$, the following uncertainty relation \cite{Macconepati} is
satisfied
\begin{equation}\label{lemma2}
\Delta A \Delta B \geq \frac{\pm\frac{i}{2}\langle \Psi|[A,B]|\Psi
\rangle}{1-\frac{1}{2}|\langle \Psi |\frac{A}{\Delta A} \pm i \frac{B}{\Delta B}
|\Psi^{\perp}\rangle|^{2}},
\end{equation}
where variables and averages are defined in the state $|\Psi\rangle$. For arbitrary states, $|\Psi^{\perp}\rangle$ is orthogonal to the state of the
system $\vert\Psi \rangle$ , and the sign is chosen such that $\pm i\langle\Psi|[A,B]|\Psi\rangle $ is positive.
To prove the inequality (\ref{modiozawapdt}) we note that 
\begin{equation}
-[A\otimes\mathbb{I},B\otimes\mathbb{I}]=[N_{A},D_{B}] + [N_{A},B] + [A,D_{B}]. \nonumber
\end{equation}
This implies, we have
\begin{eqnarray}
&&|\langle\Psi|[A,B]|\Psi\rangle| \nonumber\\
&&=\vert \langle\Psi|[N_{A},D_{B}]|\Psi\rangle+ \langle\Psi|[N_{A},B]|\Psi\rangle + \langle\Psi|[A,D_{B}]|\Psi\rangle \vert \nonumber \\
&&\leq \vert\langle\Psi|[N_{A},D_{B}]|\Psi\rangle \vert + \vert\langle\Psi|[N_{A},B]|\Psi\rangle\vert + \vert\langle\Psi|[A,D_{B}]|\Psi\rangle\vert. \nonumber
\label{commutatorozawa}
\end{eqnarray}

\begin{figure}
\includegraphics[scale=0.32,angle = 270,origin = c]{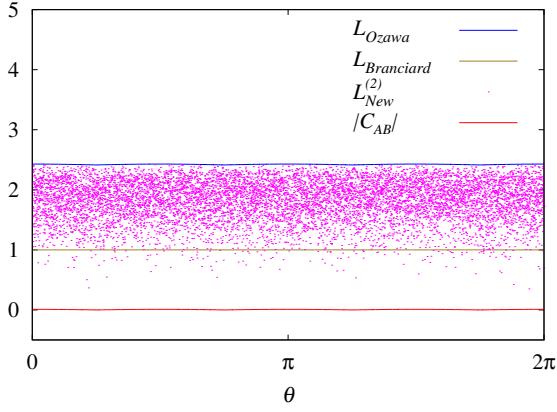}
\caption{(Color Online) The red line is the value of the commutator for any arbitrary 
initial system state (qubit), $\theta \in [0,2\pi]$ (for 10000 different states). 
The olive line is the tightest possible Branciard 
bound and the purple dots are points corresponding to the new error-disturbance relation Eq.~(\ref{modiozawapdt}). 
The blue line corresponds to Ozawa's error-disturbance relation. 
It is seen that  the new bound is tighter than the Branciard bound for roughly $1.5\%$ states.}
\label{Fig_4}
\end{figure} 
\noindent
On using Eq.~(\ref{lemma2}) to express the three commutators on the RHS of this equation individually and
with $\epsilon_{A}\geq \Delta N_A,~\eta_{B}\geq \Delta D_B$ and some properties of inequalities,
we get the inequality stated in Eq.~(\ref{modiozawapdt}).  
\end{proof}
In the case of the inequality given in Eq.~(\ref{modiozawapdt}) for comparison we denote
\begin{eqnarray}
L_{New}^{(2)}&=& \epsilon_{A}\eta_{B} + \eta_{B}\Delta A + \epsilon_{A}\Delta B \nonumber \\
&&-\frac{1}{2} \frac{|\langle\Psi | N_{A}\Delta D_{B}  \pm i D_{B} \Delta N_{A}|\Psi^{\perp} \rangle |^{2} }{\epsilon_{A}\eta_{B}} \nonumber \\
&&-\frac{1}{2} \frac{|\langle\Psi | A \Delta D_{B}  \pm i D_{B} \Delta A|\Psi^{\perp} \rangle |^{2} }{\Delta A\eta_{B}} \nonumber \\
&&-\frac{1}{2} \frac{|\langle\Psi | N_{A}\Delta B  \pm i B\Delta N_{A}|\Psi^{\perp} \rangle |^{2} }{\epsilon_{A}\Delta B}.
\label{L New 2} 
\end{eqnarray}
For illustration, consider a qubit state, with $|\psi\rangle_{s} =\cos\theta|0\rangle + \sin\theta|1\rangle$ 
and $|\phi\rangle_{p}=|1\rangle$. We choose the observables
$M_{in}$, $B_{in}$ and $U$ as in previous case but the observable $A_{in}$ 
is defined (scaled down) as
\begin{equation}
A_{in} = \lambda~ (\sigma_{x} \otimes \mathbb{I}),
\end{equation}
with $\lambda=0.01$. The resulting plot has been depicted in Fig. \ref{Fig_4}. It is seen that for a  small number of states,
the Branciard bound is superseded by our new bound. This is remarkable because Branciard's error-disturbance relation reduces to Ozawa's error-
disturbance relation under very strong conditions, {\it e.g.}, either $\epsilon_{A}$ or $\eta_{B}$ 
must be zero and expectation value of commutator of $A$ and $B$ should vanish.
Otherwise the Branciard bound is much tighter than the Ozawa bound. However, this new bound tightening the Ozawa bound goes beyond the Branciard bound for a small fraction 
of states, {\it i.e.}, $1.5 \%$ even though none of the above conditions are satisfied. One may think that the comparison with the Branciard bound in this way is problematic, 
since the commutator still appears in the definition (\ref{L Branciard}). 
If one rewrites the Branciard relation differently giving a different upper bound on the commutator, then that may lead to a different result in comparison to the new bound. 
This would be explored in the future.\\

Recently, quantum uncertainty equalities were introduced by Yao {\it et al.} \cite{Yao} for the sum of variances 
$\Delta A^2+\Delta B^2$ and product of variances $\Delta A^2 \Delta B^2$ on the trend of stronger 
uncertainty relations \cite{Macconepati}, for all pairs of incompatible observables $A$ and $B$. The sum of variance equality 
can be written for the noise and the disturbance as follows

\begin{equation}
\Delta N_{A}^{2} + \Delta D_{B}^{2}=\pm i \langle\Psi|[N_{A},D_{B}]|\Psi\rangle + \sum _{k=1}^{d-1} \vert\langle \Psi |N_{A} \pm iD_{B}|\Psi_k^{\perp} \rangle \vert^{2}, \nonumber \\
\end{equation} where $\{|\Psi\rangle, {|\Psi_k^{\perp}\rangle\}}_{k=1}^{d-1}$ form an orthonormal and complete basis in $d$ - dimensional Hilbert space.
This leads to another error-disturbance inequality, as given by
                              
\begin{eqnarray}
&&\epsilon_{A}^{2} + \eta_{B}^{2}\geq \pm i \langle \psi|[A,B]|\psi\rangle \mp i \langle\Psi|[M_{out},B_{in}]|\Psi\rangle \nonumber \\
&& \mp i \langle\Psi|[A_{in},B_{out}]|\Psi\rangle +\sum _{k=1}^{d-1} \vert\langle \Psi |N_{A} \pm iD_{B}|\Psi_k^{\perp} \rangle \vert^{2}.\nonumber \\
\label{newmdrequality}
\end{eqnarray}
Eq.~(\ref{newmdrequality}) is even tighter than Eq.~(\ref{firstmdr}). However, this error-disturbance inequality we will discuss in a separate paper.  
\end{section}
\begin{section}{Conclusion and Future Scope}
In this letter, we have proved a new error-disturbance relation. This shows no explicit dependence on the variances of original 
observables to be measured. We have demonstrated that this new error-disturbance relation given in Eq.~(\ref{firstmdreqn}) can give rise to better bound than 
the previously known inequalities for some initial state and measurement strategy. It is shown to give improvement over the Branciard bound
for a particular state that couples with the probe through a specific interaction. We have also proved a modified version of Ozawa's error-disturbance 
relation given by Eq.~(\ref{modiozawapdt}) and illustrated this for 
qubit states and some choices of scaled down values of operator $A_{in}$ to have tighter bounds. It exhibits tightening of the
Branciard bound for a very small number of states. However, it gives better bound than the Ozawa's bound in all cases.\\

Our method may be extended to the case of initial system state and/or probe state 
being mixed states and this leads to many possibilities about precision of measurement in the case of mixed state. It would also be 
interesting to see if the history of any prior interaction between the system and the probe has any effect on 
the error-disturbance relations. Uncertainty relations have applications in detection of 
entanglement. However, if we wish to experimentally perform realistic measurements on states 
in order to detect entanglement, it is important to know the corresponding error-disturbance inequalities rather than uncertainty relations. 
This formalism used here giving tighter bounds to these error-disturbance inequalities 
may be more efficient in detection of entanglement. These issues may be explored in future.
\end{section}

\begin{section}{Acknowledgment}
Authors thank  M. N. Bera and Uttam Singh for helpful discussions. CM and NS acknowledges research fellowship of 
Department of Atomic Energy, Govt of India. 
\end{section}

\bibliographystyle{h-physrev4}

\end{document}